\newcommand{\set}[1]{\{#1\}}
\newcommand{\N}{\mathbb{N}}
\newcommand{\yes}{\textsf{yes}}
\newcommand{\no}{\textsf{no}}
\title*{Characterizing $\PSPACE$ with Shallow Non-Confluent P~Systems}
\author{Alberto Leporati \and Luca Manzoni \and Giancarlo Mauri
  \and Antonio E. Porreca \and Claudio Zandron}
\authorrunning{Alberto Leporati et al.}
\institute{Dipartimento di informatica, Sistemistica e Comunicazione\\
  Università degli Studi di Milano-Bicocca, \\
  Viale Sarca 336, 20126, Milan, Italy\\
  \texttt{\{leporati, luca.manzoni, mauri, porreca, zandron\}@disco.unimib.it}
}
\begin{document}

\maketitle

\begin{abstract}
  In P~systems with active membranes, the question of understanding the power of non-confluence within a polynomial time bound is still an open problem. It is known that, for shallow P~systems, that is, with only one level of nesting, non-confluence allows them to solve conjecturally harder problems than confluent P~systems, thus reaching $\PSPACE$. Here we show that $\PSPACE$ is not only a bound, but actually an exact characterization. Therefore, the power endowed by non-confluence to shallow P~systems is equal to the power gained by \emph{confluent} P~systems when non-elementary membrane division and polynomial depth are allowed, thus suggesting a connection between the roles of non-confluence and nesting depth.
\end{abstract}

\section{Introduction}
\label{sec:introduction}

While families of \emph{confluent} recognizer P~systems with active membranes with charges are known to characterize the complexity class $\PSPACE$ when working in polynomial time~\cite{Sosik2003a,Sosik2007a}, their computational power when the nesting level is constrained to one (i.e., only one level of membranes inside the outermost membrane, usually called \emph{shallow} P~systems) is reduced to the class $\P^{\boldsymbol\#\P}$, which is conjecturally smaller~\cite{Leporati2014d}. While confluent P~systems can make use of nondeterminism, they are constrained in returning the same result for all computations starting from the same initial configuration. However, by accepting when at least one computation accepts, like nondeterministic Turing Machines (TM) traditionally do, P~systems can make use of the entire power of nondeterminism: uniform families of \emph{non-confluent} recognizer P~systems with active membranes with charges can solve $\PSPACE$-complete problems even in the shallow case and even when send-in rules are disallowed (i.e., for monodirectional systems)~\cite{Leporati2016a}. Here we show that, in fact, $\PSPACE$ is a characterization of this kind of shallow non-confluent P~systems when they work in polynomial time. This result shows that the complex relation between computational power, nesting depth, and monodirectionality present for confluent P~systems is absent in the non-confluent case. In particular, in the confluent case, systems with no nesting characterize $\P$~\cite{Zandron2001a} whereas, additional nesting gives additional power~\cite{Leporati2014e} until reaching $\PSPACE$ when unlimited nesting is allowed~\cite{Sosik2003a,Sosik2007a}. In the monodirectional case even unlimited nesting cannot escape $\P^\NP$, which is conjecturally smaller~\cite{Leporati2016b}. Non-confluent systems, on the other hand, characterize $\NP$ when there are no internal membranes~\cite{Porreca2010a}, and immediately gain the full power of $\PSPACE$ with only one level of nesting. Furthermore, at least for shallow systems, this provides an exact characterization. It is therefore natural to ask what is the relation between the mechanisms that empower confluent P~systems and the full power of non-confluence. Are the former ones only a way to simulate the latter?

\section{Basic Notions}
\label{sec:basic-notions}

For an introduction to membrane computing and the related notions of formal language theory, we refer the reader to \emph{The Oxford Handbook of Membrane Computing}~\cite{Paun2010a}. Here we recall the formal definition of P~systems with active membranes using only elementary division rules.

\begin{definition}
  A \emph{P~system with active membranes with elementary division rules} of initial degree~$d \ge 1$ is a tuple
  \begin{align*}
    \Pi = (\Gamma, \Lambda, \mu, w_{h_1}, \ldots, w_{h_d}, R)
  \end{align*}
  where:
  \begin{itemize}
    \item $\Gamma$ is an alphabet, i.e., a finite non-empty set of symbols, usually called \emph{objects};
    \item $\Lambda$ is a finite set of labels for the membranes;
    \item $\mu$ is a membrane structure (i.e., a rooted \emph{unordered} tree, usually represented by nested brackets) consisting of~$d$ membranes labelled by elements of~$\Lambda$ in a one-to-one way;
    \item $w_{h_1}, \ldots, w_{h_d}$, with~$h_1, \ldots, h_d \in \Lambda$, are multisets (finite sets whose elements have a multiplicity) of objects in~$\Gamma$, describing the initial contents of the~$d$ regions of~$\mu$;
    \item $R$ is a finite set of rules.
  \end{itemize}
\end{definition}

\noindent
Each membrane possesses, besides its label and position in~$\mu$, another attribute called \emph{electrical charge}, which can be either neutral~($0$), positive~($+$) or negative~($-$) and is always neutral before the beginning of the computation.

The rules in~$R$ are of the following types:
\begin{enumerate}
  \item[(a)] \emph{Object evolution rules}, of the form~$\pevolve{h}{\alpha}{a}{w}$

  They can be applied inside a membrane labelled by~$h$, having charge~$\alpha$ and containing an occurrence of the object~$a$; the object~$a$ is rewritten into the multiset~$w$ (i.e.,~$a$ is removed from the multiset in~$h$ and replaced by the objects in~$w$).

  \item[(b)] \emph{Send-in communication rules}, of the form $a \, [\;]_h^\alpha \to [b]_h^\beta$

  They can be applied to a membrane labelled by $h$, having charge $\alpha$ and such that the external region contains an occurrence of the object $a$; the object $a$ is sent into $h$ becoming $b$ and, simultaneously, the charge of $h$ is changed to~$\beta$.

  \item[(c)] \emph{Send-out communication rules}, of the form~$\psendout{h}{\alpha}{a}{\beta}{b}$

  They can be applied to a membrane labelled by~$h$, having charge~$\alpha$ and containing an occurrence of the object~$a$; the object~$a$ is sent out from~$h$ to the outside region becoming~$b$ and, simultaneously, the charge of~$h$ becomes~$\beta$.

  \item[(e)] \emph{Elementary division rules}, of the form~$\pdivide{h}{\alpha}{a}{\beta}{b}{\gamma}{c}$

  They can be applied to a membrane labelled by~$h$, having charge~$\alpha$, containing an occurrence of the object~$a$ but having no other membrane inside (an \emph{elementary membrane}); the membrane is divided into two membranes having label~$h$ and charges~$\beta$ and~$\gamma$; the object~$a$ is replaced, respectively, by~$b$ and~$c$, while the other objects of the multiset are replicated in both membranes.
\end{enumerate}

The instantaneous \emph{configuration} of a membrane of label $h$ consists of its charge~$\alpha$ and the multiset~$w$ of objects it contains at a given time. It is denoted by~$\membrane{h}{\alpha}{w}$. The \emph{(full) configuration}~$\confC$ of a P~system~$\Pi$ at a given time is a rooted, unordered tree. The root is a node corresponding to the external environment of~$\Pi$, and has a single subtree corresponding to the current membrane structure of~$\Pi$. Furthermore, the root is labelled by the multiset located in the environment, and the remaining nodes by the configurations~$\membrane{h}{\alpha}{w}$ of the corresponding membranes. In the \emph{initial configuration} of $\Pi$, the configurations of the membranes are $\membrane{h_1}{0}{w_{h_1}}, \ldots, \membrane{h_d}{0}{w_{h_d}}$.

A P~system is \emph{shallow} if it contains at most one level of membranes inside the outermost membrane. This means that all the membranes contained in the outermost membrane are elementary, i.e., they contain no other nested membrane.

A computation step changes the current configuration according to the following set of principles:
\begin{itemize}
  \item Each object and membrane can be subject to at most one rule per  step, except for object evolution rules: inside each membrane, several evolution rules can be applied simultaneously.
  \item The application of rules is \emph{maximally parallel}: each object appearing on the left-hand side of evolution, communication, or division rules must be subject to exactly one of them (unless the current charge of the membrane prohibits it). Analogously, each membrane can only be subject to one communication or division rule (types (b)--(e)) per computation step; these rules will be called \emph{blocking rules} in the rest of the paper. In other words, the only objects and membranes that do not evolve are those associated with no rule, or only to rules that are not applicable due to the electrical charges.
  \item When several conflicting rules can be applied at the same time, a nondeterministic choice is performed; this implies that, in general, multiple possible configurations can be reached after a computation step.
  \item In each computation step, all the chosen rules are applied simultaneously (in an atomic way). However, in order to clarify the operational semantics, each computation step is conventionally described as a sequence of micro-steps whereby each membrane evolves only after their internal configuration (including, recursively, the configurations of the membrane substructures it contains) has been updated. In particular, before a membrane division occurs, all chosen object evolution rules must be applied inside it; this way, the objects that are duplicated during the division are already the final ones.
  \item The outermost membrane cannot be divided, and any object sent out from it cannot re-enter the system again.
\end{itemize}
A \emph{halting computation} of the P~system~$\Pi$ is a finite sequence~$\compC = (\confC_0, \ldots, \confC_k)$ of configurations, where~$\confC_0$ is the initial configuration, every~$\confC_{i+1}$ is reachable from~$\confC_i$ via a single computation step, and no rules of~$\Pi$ are applicable in~$\confC_k$. A \emph{non-halting} computation~$\compC = (\confC_i : i \in \N )$ consists of infinitely many configurations, again starting from the initial one and generated by successive computation steps, where the applicable rules are never exhausted.

P~systems can be used as language \emph{recognisers} by employing two distinguished objects~$\yes$ and~$\no$: in this case we assume that all computations are halting, and that either one copy of object~$\yes$ or one of object~$\no$ is sent out from the outermost membrane, and only in the last computation step, in order to signal acceptance or rejection, respectively. If all computations starting from the same initial configuration are accepting, or all are rejecting, the P~system is said to be \emph{confluent}. In this paper we deal, however, with \emph{non-confluent} P~systems, where multiple computations can have different results and the overall result is established as for nondeterministic TM: it is acceptance iff an accepting computation exists~\cite{Perez2003a}.

In order to solve decision problems (or, equivalently, decide languages) over an alphabet $\Sigma$, we use \emph{families} of recogniser P~systems $\familyPi = \{ \Pi_x : x \in \Sigma^\star \}$. Each input~$x$ is associated with a P~system~$\Pi_x$ deciding the membership of~$x$ in a language $L \subseteq \Sigma^\star$ by accepting or rejecting. The mapping~$x \mapsto \Pi_x$ must be efficiently computable for inputs of any length, as discussed in detail in~\cite{Murphy2011a}.

\begin{definition}
  \label{def:uniform}
  A family of P~systems~$\familyPi = \{ \Pi_x : x \in \Sigma^\star \}$ is \emph{(polynomial-time) uniform} if the mapping~$x \mapsto \Pi_x$ can be computed by two polynomial-time deterministic Turing machines~$E$ and~$F$ as follows:
  \begin{itemize}
    \item $F(1^n) = \Pi_n$, where~$n$ is the length of the input~$x$ and~$\Pi_n$ is a common P~system for all inputs of length~$n$, with a distinguished input membrane.
    \item $E(x) = w_x$, where~$w_x$ is a multiset encoding the specific input~$x$.
    \item Finally,~$\Pi_x$ is simply~$\Pi_n$ with~$w_x$ added to its input membrane.
  \end{itemize}
  The family~$\familyPi$ is said to be (polynomial-time) semi-uniform if there exists a single deterministic polynomial-time Turing machine~$H$ such that~$H(x) = \Pi_x$ for each~$x \in \Sigma^\star$.
\end{definition}

Any explicit encoding of~$\Pi_x$ is allowed as output of the construction, as long as the number of membranes and objects represented by it does not exceed the length of the whole description, and the rules are listed one by one. This restriction is enforced in order to mimic a (hypothetical) realistic process of construction of the P~systems, where membranes and objects are presumably placed in a constant amount during each construction step, and require actual physical space proportional to their number; see also~\cite{Murphy2011a} for further details on the encoding of P~systems.

In the following, we denote the class of problems solvable by polynomial-time uniform or semi-uniform families of non-confluent shallow P~systems with active membranes with charges by $\NPMC_{\pAM[depth\mbox{-}1,-d,-ne]}^{[\star]}$, where $[\star]$ denotes optional semi-uniformity. If no restriction on the depth of the membrane structure is present, but both non-elementary division and dissolution rules are forbidden, then the corresponding class of problems is denoted by $\NPMC_{\pAM[-d,-ne]}^{[\star]}$.

\section{Nondeterministic Simulation with Oracles}

Let $\familyPi$ be a semi-uniform family of non-confluent shallow recognizer P~systems with active membranes with charges, and let $H$ be the TM of the semi-uniformity condition of $\familyPi$. We are going to define a machine $M$ working in polynomial \emph{space} such that on input $H$ and $x$ Turing machine $M$ accepts iff the P~system $H(x) = \Pi_x$ of $\familyPi$ accepts in polynomial \emph{time}. Notice that a single machine $M$ suffices for all families of P~systems. The machine associated with a specific family $\familyPi$ of P~systems can be obtained by ``hard-coding'' the input $H$ to $M$.

First of all, on input $H$ and $x$, machine $M$ simulates machine $H$ with $x$ as input to obtain a polynomial-size description of $\Pi_x$. To simplify the description of the procedure used by machine $M$ to simulate $\Pi_x$, we will assume $M$ to work as a nondeterministic polynomial-time TM with access to an oracle for a problem in $\NPSPACE=\PSPACE$. As the following result shows, both this nondeterministic behaviour and the oracle queries can still all be simulated using a polynomial-space deterministic TM.

\begin{proposition}
  \label{thm:still-pspace}
  $\NP^\NPSPACE = \PSPACE$.
\end{proposition}
\begin{proof}
  Clearly $\NP^\NPSPACE \supseteq \PSPACE$, hence only the opposite inclusion needs to be proved. Let $N$ be a polynomial-time nondeterministic TM with access to an oracle for a language $L \in \NPSPACE$. Let $D$ be a deterministic polynomial space TM built in the following way:
  \begin{itemize}
    \item $D$ simulates $N$ until a query is performed. This simulation, including the nondeterministic choices of $N$, can be performed in polynomial space by $D$, since $\NP \subseteq \PSPACE$.
    \item Since $L \in \NPSPACE$ and $\NPSPACE = \PSPACE$, there exists a deterministic polynomial space TM deciding $L$ that can be simulated by $D$ to answer any query performed by $N$ while still using only a polynomial amount of space. Once a query has been answered, $D$ can resume the simulation of $N$.
  \end{itemize}
  Since $D$ can faithfully simulate $N$ and its oracle queries, $D$ can recognize the same language as $N$, thus showing that $\NP^\NPSPACE \subseteq \PSPACE$, as desired.
  \qed
\end{proof}

We can now describe how the simulation of $\Pi_x$ is carried on by $M$. In the following, we assume that the size of the input $x$ is $n$, and that each computation of $\Pi_x$ requires at most $T$ time steps before halting and producing a result. By hypothesis $T$ is polynomial with respect to $n$.

\subsection{Simulation of the Outermost Membrane}
\label{sec:nondeterministic-sim}

The main idea of this construction is to simulate the evolution of the outermost membrane directly by means of a nondeterministic polynomial-time TM. All interactions with the internal membranes are performed via nondeterministic guesses. That is, for each communication rule and for each time step, the number of rules that are applied between the outermost and the inner membranes is guessed in a nondeterministic way. If $\yes$ has been sent out by the simulation of the outermost membrane, an oracle query is performed to check whether all performed interactions with the inner membranes were \emph{consistent} with this information, that is, if a computation of the inner membranes able to perform the guessed interactions actually exists. If the query returns a positive answer, then a computation of the entire system actually producing $\yes$ exists. In any other case, the simulating machine rejects (since either an invalid simulation of the outermost membrane -- and of the P~system -- was produced, or the simulation itself was correct but the simulated computation was a rejecting one).

\newcommand{\iTable}{\mathcal{T}}
\newcommand{\uTable}{\mathcal{U}}
\newcommand{\maxT}{\mathsf{K}}
To perform this construction we build a table $\iTable$ indexed by pairs of the form $(r, t)$, where $r \in R$ is either a send-in rule from the outermost membrane to one of the internal membranes or a send-out rule from one of the internal membranes to the outermost membrane, and $t \in \set{0, \ldots, T-1}$ is a time step. The entry $\iTable(r, t)$ represent the number of times rule $r$ has been applied at the time step $t$. It is important to notice that table $\iTable$ can be stored using a polynomial amount of space. In fact, the number of entries is limited by the size of $R$ (which, by uniformity condition, is polynomial in the input size $n$), and by the number $T$ of time steps needed for the P~system to halt. We only need to prove that each entry $\iTable(r,t)$ can be stored in a polynomial amount of space.

Let $m \in \N$ be number of internal membranes in the initial configuration of $\Pi_x$. By the semantics of the rules of P~systems, the number of objects sent in to internal membranes or sent out from them after $t$ time steps cannot be greater than $m \times 2^t$, where the second multiplicative factor is the maximum number of membranes per label that can be obtained by membrane division in $t$ time steps. Since this value is exponential in $t$, it can be represented by a polynomial number of bits with respect to $t \le T$. Thus, each entry of $\iTable$ requires at most a polynomial amount of space with respect to $n$. We denote the maximum value attainable by an entry of $\iTable$ by $\maxT$.

Apart from keeping track of the communication rules applied between the outermost and the internal membranes, we also need to assure that all rules are applied in a maximally parallel way. To do so, we define another table $\uTable$ indexed by pairs of the form $(a, t)$ where $a \in \Gamma$ is an object type and $t \in \set{0, \ldots, T-1}$ is, as before, a time step. The entry $\uTable(a, t)$ represents the number of objects of type $a$ in the outermost membrane that had no rule applied to them at time $t$. Table $\uTable$ can, too, be stored in a polynomial amount of space.

The simulation procedure of the outermost membrane is detailed as Algorithm~\ref{alg:outermost}. There, label $h$ always indicates the outermost membrane and the label $k$ an internal membrane label, while $|w|_a$ denotes the number of instances of the object $a$ inside the multiset $w$. The \emph{applicability} of a rule refers, in the algorithm, to the fact that the indicated membrane must have the correct charge $\alpha$ and, if the rule is \emph{blocking}, that the membrane has not already been used by another blocking rule in the same time step. For example, the condition on line~\ref{alg1:to-env-applicable} of Algorithm~\ref{alg:outermost} is never verified once another send-out rule has been simulated in a previous iteration of the loop for the current time step.

\begin{algorithm}
  \SetKw{KwGuess}{guess}
  \SetKw{KwQuery}{query}

  $w \leftarrow$ initial multiset of the outermost membrane\;  \label{alg1:init}
  $\mathsf{env} \leftarrow \varnothing$\;
  $\mathsf{charge} \leftarrow 0$\;  \label{alg1:init-end} 
  \For{$t \leftarrow 0$ \KwTo $T-1$}{ \label{alg1:for-cycle-begin}
    \For{all applicable $r = \psendin{k}{\alpha}{a}{\beta}{b}$}{ \label{alg1:send-in-applicable}
      $\iTable(r, t) \leftarrow $ \KwGuess($0$, $\min(|w|_a, \maxT)$)\; \label{alg1:guess-send-in-exec}
      mark $\iTable(r, t)$ instances of $a$ for removal from $w$\; \label{alg1:mark-send-in}
    } \label{alg1:send-in-applicable-end}
    \For{$r = \psendout{k}{\alpha}{a}{\beta}{b}$}{ \label{alg1:send-out-applicable}
      $\iTable(r, t) \leftarrow$ \KwGuess($0$, $\maxT$)\; \label{alg1:guess-send-out-exec}
      mark $\iTable(r, t)$ instances of $b$ for insertion in $w$\; \label{alg1:mark-send-out}
    } \label{alg1:send-out-applicable-end}
    \For{all applicable $r = \pevolve{h}{\alpha}{a}{u}$}{ \label{alg1:evolve-applicable}
      $m \leftarrow$ \KwGuess($0$, $|w|_a$)\; \label{alg1:guess-evolve}
      mark $m$ copies of $u$ for addition to $w$ and $m$ copies of $a$ for removal\; \label{alg1:mark-evolve}
    } \label{alg1:evolve-applicable-end}
    \For{all applicable $r = \psendout{h}{\alpha}{a}{\beta}{b}$}{ \label{alg1:to-env-applicable}
      $m \leftarrow$ \KwGuess($0$, $1$)\; \label{alg1:guess-to-env}
      \If{$m = 1$}{ \label{alg1:if-apply}
        mark one copy of $a$ for removal from $w$\; \label{alg1:obj-removal}
        mark one copy of $b$ for addition in $\mathsf{env}$\; \label{alg1:obj-add}
        mark $\mathsf{charge}$ to be changed from $\alpha$ to $\beta$\; \label{alg1:change-charge}
      } \label{alg1:if-apply-end}
    } \label{alg1:to-env-applicable-end}
    \For{$a \in \Gamma$}{ \label{alg1:collect-unused}
      $\uTable(a, t) \leftarrow$ number of instances of $a$ in $w$ not marked\; \label{alg1:update-utable}
    } \label{alg1:collect-unused-end}
    Apply modifications to $w$, $\mathsf{env}$, and $\mathsf{charge}$ according to the markings\; \label{alg1:modify-state}
    \If{rule application was not maximally parallel}{ \label{alg1:not-maximally-parallel}
      reject\; \label{alg1:reject-no-maximaly parallel}
    } \label{alg1:not-maximally-parallel-end}
    \If{$\yes$ or $\no$ has been sent out in the environment}{ \label{alg1:yes-or-no}
      \eIf{\KwQuery$(\iTable, \uTable, t)$ answer is positive and no further rules are applicable in the next time step}{ \label{alg1:query}
        accept or reject accordingly\; \label{alg1:accordingly}
      }{ \label{alg1:else-we-were-wrong}
        reject\; \label{alg1:reject-wrong-guess}
      } \label{alg1:query-end}
    } \label{alg1:yes-or-no-end}
  } \label{alg1:for-cycle-end}
  reject\; \label{alg1:reject-too-much-time}
  \caption{\label{alg:outermost}The nondeterministic algorithm that performs the simulation of the outermost membrane of $\Pi_x$.}
\end{algorithm}

Lines~\ref{alg1:init}--\ref{alg1:init-end} perform the initialization of the algorithm, setting the initial content and charge of the outermost membrane and declaring the environment initially empty. The main simulation loop is performed in lines \ref{alg1:for-cycle-begin}--\ref{alg1:for-cycle-end}. Since the maximum number of time steps needed for $\Pi_x$ to produce a result is $T$, the simulation loop is repeated at most $T$ times. If the loop ends without having produced either $\yes$ on $\no$ in the environment while simultaneously halting, the simulation performed did not correspond to any actual computation of $\Pi_x$, thus a negative answer must be produced (line~\ref{alg1:reject-too-much-time}).

Lines~\ref{alg1:send-in-applicable}--\ref{alg1:send-in-applicable-end} deal with the send-in rules from the outermost membrane to the inner membranes. Since the number of internal membranes where the rule $r$ can be applied is not known, the number is nondeterministically chosen and is bounded by the maximum number of inner membranes and the number of objects of type $a$ in the outermost membrane (line~\ref{alg1:guess-send-in-exec}). The guessed number of internal membranes saved in table $\iTable$ and the effect of the rules on the multiset $w$ is scheduled for application (line~\ref{alg1:mark-send-in}). Notice that, since the state of the internal membranes is not stored, this amounts to the removal of $\iTable(r, t)$ instances of objects of type $a$ from $w$.

Lines~\ref{alg1:send-out-applicable}--\ref{alg1:send-out-applicable-end} deal with send-out rules from the internal membranes to the outermost membrane. As before, since the configuration and number of the internal membranes is not known, the number of times this rule is applied is nondeterministically guessed (line~\ref{alg1:guess-send-out-exec}), saved in table $\iTable$, and the appearance of the corresponding objects of type $b$ in $w$ is scheduled (line~\ref{alg1:mark-send-out}).

Lines~\ref{alg1:evolve-applicable}--\ref{alg1:evolve-applicable-end} perform the simulation of the evolution rules inside the outermost membrane. Since the simulated system is non-confluent, the actual number of applications of each rule is guessed (line~\ref{alg1:guess-evolve}) before the actual effect of the rule applications are scheduled (line~\ref{alg1:mark-evolve}).

Lines~\ref{alg1:to-env-applicable}--\ref{alg1:to-env-applicable-end} deal with the application of send-out rules from the outermost membrane to the environment. First of all, a nondeterministic guess is performed to decide whether the rule is actually applied (line~\ref{alg1:guess-to-env}). If so, then the actual effects of the rules are scheduled for application (lines~\ref{alg1:if-apply}--\ref{alg1:if-apply-end}).

The table $\uTable$ is then updated to memorize the number of objects that were not subjected to any rule (lines~\ref{alg1:collect-unused}--\ref{alg1:collect-unused-end}). This will be used during the query process to ensure that the send-in rules from the outermost membrane to the internal membranes were actually applied in a maximally parallel way.

All the scheduled modifications to the content and charge of the outermost membrane and to the environment are now executed (line~\ref{alg1:modify-state}). If there are irreconcilable problems in the maximally parallel application of the rules then a rejection is performed (lines~\ref{alg1:not-maximally-parallel}--\ref{alg1:not-maximally-parallel-end}). This happens when there were objects in the outermost membrane that were not selected to be sent-in into the internal membranes (this can be checked by looking at table $\uTable$), nor were they subject to applicable send-out or evolution rules.

Finally, if either $\yes$ or $\no$ appears in the environment (lines~\ref{alg1:yes-or-no}--\ref{alg1:yes-or-no-end}) then it is necessary to check whenever the guesses performed for the interaction with the internal membranes were accurate and no further rules are applicable in the next time step in the outermost membrane (lines~\ref{alg1:query}--\ref{alg1:query-end}). If the answer to the query is positive and no further rules were actually applicable, then the simulation can either accept or reject accordingly (line~\ref{alg1:accordingly}). Otherwise, the simulation performed did not correspond to any actual computation of $\Pi_x$ and we must reject (line~\ref{alg1:reject-wrong-guess}).

Algorithm~\ref{alg:outermost} can be executed in polynomial time by a nondeterministic TM with access to an oracle to perform the query procedure. In fact, both the outer loop and the inner loops are executed only a polynomial amount of times (either bounded by the time needed for $\Pi_x$ to halt or by the number or rules in the system). All other operations, including checking the applicability of rules, can be performed in polynomial time given an efficient description of the configuration of the outermost membrane (in which the number of objects is stored in binary). Furthermore, all nondeterministic guesses are of a polynomial amount of bits.

\subsection{Simulation of the Oracle}
\label{sec:queries}

The query that is simulated by means of a nondeterministic machine working in polynomial space is the following one:
\begin{quote}
  Is there an halting computation of length $t$ of the internal membranes consistent with the rule applications guessed?
\end{quote}
To be able to answer this query in nondeterministic polynomial space the main idea is to simulate each membrane sequentially and keep track of the communication rules that are applied while comparing them with the ones guessed by the simulation of the outermost membrane. If division is applied then only the simulation of one of the dividing membranes is immediately carried out (as performing them all at the same time might require exponential -- instead of polynomial -- space) while the other membrane is pushed into a stack, thus performing a \emph{depth-first simulation} of the membrane hierarchy. This ensures that a polynomial amount of space suffices: it the space needed to simulate one membrane, plus a stack in which the number of elements is at most $T$, one for each time step. This algorithm is similar to the deterministic one presented in~\cite{Sosik2007a}, although with an explicit stack instead of a recursive definition, and the further difference that their algorithm was able to work for unbounded-depth system. The actual algorithm implemented to answer the query is presented in Algorithm~\ref{alg:inner}.

\begin{algorithm}
  \SetKw{KwGuess}{guess}
  \SetKw{KwPop}{pop}
  \SetKw{KwPush}{push$_S$}
  $S \leftarrow \varnothing$ \; \label{alg2:init-stack}
  \For{all internal membrane $[w]_k^\alpha$ in the initial configuration}{  \label{alg2:push-initial-conf}
    \KwPush $(w, k, \alpha, 0)$ \;  \label{alg2:push-membrane-conf}
  }  \label{alg2:push-initial-conf-end}
  \While{$S$ is not empty}{  \label{alg2:main-loop}
    $(w, k, \mathsf{charge}, t_{\textrm{push}}) \leftarrow $ \KwPop $S$ \;  \label{alg2:pop-configuration}
    \For{$t' \leftarrow t_{\textrm{push}}$ \KwTo $t$}{  \label{alg2:for-the-remaining-time-steps}
      \For{$r = \pdivide{k}{\alpha}{a}{\beta}{b}{\gamma}{c}$ applicable}{  \label{alg2:apply-division}
        $m \leftarrow $ \KwGuess$(0,1)$\; \label{alg2:guess-division}
        \If{$m = 1$}{ \label{alg2:division-is-applied}
          mark a copy of $a$ for removal, a copy of $b$ for addition to $w$\; \label{alg2:mark-division}
          mark $\mathsf{charge}$ to be changed to $\beta$\; \label{alg2:mark-division-charge}
        } \label{alg2:division-is-applied-end}
      } \label{alg2:apply-division-end}
      \For{$r = \psendin{k}{\alpha}{a}{\beta}{b}$ applicable}{  \label{alg2:apply-send-in}
        $m \leftarrow $ \KwGuess$(0,1)$\;  \label{alg2:guess-send-in}
        \If{$m = 1$}{  \label{alg2:send-in-us-applied}
          $\iTable(r, t') \leftarrow \iTable(r, t') - 1$\;  \label{alg2:update-table-send-in}
          mark a copy of $b$ for addition to $w$\;  \label{alg2:mark-send-in}
          mark $\mathsf{charge}$ to be changed to $\beta$\;  \label{alg2:mark-send-in-charge}
        }  \label{alg2:send-in-is-applied-end}
      }  \label{alg2:apply-send-in-end}
      \For{$r = \psendout{k}{\alpha}{a}{\beta}{b}$ applicable}{  \label{alg2:apply-send-out}
        $m \leftarrow $ \KwGuess$(0,1)$\;  \label{alg2:guess-send-out}
        \If{$m = 1$}{  \label{alg2:send-out-is-applied}
          $\iTable(r, t') \leftarrow \iTable(r, t') - 1$\;  \label{alg2:update-table-send-out}
          mark a copy of $a$ for removal from $w$\;  \label{alg2:mark-send-out}
          mark $\mathsf{charge}$ to be changed to $\beta$\;  \label{alg2:mark-send-out-charge}
        }  \label{alg2:send-out-is-applied-end}
      }  \label{alg2:apply-send-out-end}
      \For{$r = \pevolve{k}{\alpha}{a}{u}$ applicable}{  \label{alg2:apply-evolve}
        $m \leftarrow $ \KwGuess$(0,|w|_a)$\;  \label{alg2:guess-evolution}
        mark $m$ copies of $a$ for removal, $m$ copies of $u$ for addition to $w$\;  \label{alg2:mark-evolution}
      }  \label{alg2:apply-evolve-end}
      apply marked modifications to $w$ and $\mathsf{charge}$\;  \label{alg2:modify-state}
      \If{rule application was not maximally parallel}{ \label{alg2:not-maximally-parallel}
        reject\; \label{alg2:reject-no-maximaly parallel}
      } \label{alg2:not-maximally-parallel-end}
      \textbf{if} division was applied, \KwPush $(w - \{b\} + \{c\}, k, \gamma, t')$\;  \label{alg2:push-if-division}
    }  \label{alg2:for-the-remaining-time-steps-end}
    \If{the current membrane has further applicable rules}{ \label{alg2:check-if-halted}
      reject\; \label{alg2:reject-not-halted}
    }
  }  \label{alg2:main-loop-end}
  \eIf{each entry of $\iTable$ is $0$}{  \label{alg2:check-guess}
    accept\;  \label{alg2:acceptance}
  }{
    reject\;  \label{alg2:rejection}
  }  \label{alg2:check-guess-end}
  \caption{\label{alg:inner} The nondeterministic polynomial space algorithm simulating the inner membranes of $\Pi_x$.}
\end{algorithm}

Lines~\ref{alg2:init-stack}--\ref{alg2:push-initial-conf-end} perform the initial set-up, where a new stack $S$ is filled with the configuration of all internal membranes at the initial time step, i.e., $t = 0$. In particular, for each membrane the multiset of objects contained, label, charge, and time step of the simulation are all pushed as an single record into $S$.

In the main loop of lines~\ref{alg2:main-loop}--\ref{alg2:main-loop-end} the simulation of all internal membranes is performed one at a time. This loop is executed until the stack of membranes to be simulated is not empty, which might require an exponential amount of time.

Once a new membrane to be simulated \emph{starting at time $t_{\mathsf{push}}$} has been extracted (line~\ref{alg2:pop-configuration}) the simulation of the membrane proceeds up to time step $t$, which is given in input as part the query (loop of lines~\ref{alg2:for-the-remaining-time-steps}--\ref{alg2:for-the-remaining-time-steps-end}) and represents the time at which the simulation of the outermost membrane has suspended in order to perform the query.

In lines~\ref{alg2:apply-division}--\ref{alg2:apply-division-end}, for each applicable division rule, i.e., the correct object and charge are present and the membrane has not already been used by a blocking rule in this time step, a nondeterministic choice is performed (line~\ref{alg2:guess-division}) to decide if the rule is actually applied. If so (lines~\ref{alg2:apply-division}--\ref{alg2:apply-division-end}), then the modifications described by the first half of the right-hand-side of the rule are performed, while the other membrane resulting from the division will be pushed on the stack $S$ at the end of the simulation of the current time step (line~\ref{alg2:push-if-division}). This cannot be performed earlier since the rewriting rules are applied, by the semantics of rule application in P~systems, before the division actually takes place.

The simulation of both send-in and send-out rules (lines~\ref{alg2:apply-send-in}--\ref{alg2:apply-send-in-end} and lines~\ref{alg2:apply-send-out}--\ref{alg2:apply-send-out-end}, respectively) is performed similarly. Since we are working in a situation of non-confluence, even if a rule is applicable, in order  to actually decide whether to apply it, a nondeterministic guess is performed (line~\ref{alg2:guess-send-in} and line~\ref{alg2:guess-send-out}, respectively). In both cases the modifications to be performed to the membrane configuration are scheduled for later execution (lines~\ref{alg2:mark-send-in}--\ref{alg2:mark-send-in-charge} and lines~\ref{alg2:mark-send-out}--\ref{alg2:mark-send-out-charge}, respectively). Since send-in and send-out are communication rules between the outermost membrane and the internal membranes, each time one of them is applied the value of $\iTable(r, t')$ is decremented. If, at the end of the simulation, the number of guessed applications and the real number of applications of the communication rules coincides, all entries $\iTable(r, t')$ should be $0$ (at line~\ref{alg2:update-table-send-in} and line~\ref{alg2:update-table-send-out}, respectively).

The application of evolution rules (lines~\ref{alg2:apply-evolve}--\ref{alg2:apply-evolve-end}), their effect being limited to the internal state of the membrane, is simpler. As usual, which rules are actually applied is determined by a nondeterministic choice (line~\ref{alg2:guess-evolution}).

Once all rule applications have been decided, the actual modifications to the state of the membrane are applied (line~\ref{alg2:modify-state}) and, if the rule application was not maximally parallel then the computation rejects (lines~\ref{alg2:not-maximally-parallel}--\ref{alg2:not-maximally-parallel-end}). This can be verified by checking if there still exist objects inside the membrane with applicable rules but no rule was applied to them, or if $\uTable(a, t')$ is positive for some $a \in \Gamma$ with an applicable send-in rule to the currently simulated membrane. Since $\uTable(a, t')$ indicates the number of objects that were available for the application of send-in from the outermost membrane but no internal membrane was available, such an inconsistency would denote that the simulation of the internal membranes had no correspondence to the already performed simulation of the outermost membrane.

If a division rule was applied, then the configuration of the second membrane resulting from division is pushed to the stack $S$ (line~\ref{alg2:push-if-division}). Here, an instance of the object $b$ has been replaced by an instance of object $c$ and the charge has been changed from $\beta$ to $\gamma$ to obtain from the current membrane a copy corresponding to the other one obtained by division.

Before proceeding with the simulation of another membrane, we check that after $t$ steps the computation in this membrane has actually halted (lines~\ref{alg2:check-if-halted}--\ref{alg2:reject-not-halted}). Otherwise the current computation must reject (line~\ref{alg2:reject-not-halted}).

After the simulation of all internal membranes is finished, i.e., the stack was emptied, a check on the entries of $\iTable$ is performed. If all and every communication rule application guessed during the simulation of the outermost membrane was actually executed then all entries of $\iTable$ should be $0$. A positive (resp., negative) value for $\iTable(r, t)$ denotes that less (resp., more) applications of rule $r$ at time $t$ were performed than the number that was guessed.

If at least one accepting computation of the machine simulating the oracle query exists then the answer to the query is positive. Furthermore, if there is a way to ``glue'' the simulation of the outermost membrane and of the internal membranes, then the result produced by Algorithm~\ref{alg:outermost} was correct. Combining this simulation with the inverse simulation presented in~\cite{Leporati2016a}, we can then state the main result of the paper.

\begin{theorem}
  \label{thm:pspace}
  $\PSPACE = \NPMC_{\pAM[depth\mbox{-}1,-d,-ne]}^{[\star]}$. \qed
\end{theorem}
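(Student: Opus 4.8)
The plan is to derive Theorem~\ref{thm:pspace} from two inclusions, since the substantive machinery for each is already in place. For the upper bound $\NPMC_{\pAM[depth\mbox{-}1,-d,-ne]}^{[\star]} \subseteq \PSPACE$ I would argue that the machine $M$ constructed in Section~3 decides, in polynomial space, whether $\Pi_x$ accepts. First I would note that it suffices to treat the \emph{semi}-uniform case: every polynomial-time uniform family is in particular semi-uniform (compose the machines $E$ and $F$ of Definition~\ref{def:uniform} into a single $H$), so the uniform class is contained in the semi-uniform one and a bound on the latter transfers to the former. Correctness of $M$ then reduces to two claims: (i) Algorithm~\ref{alg:outermost} has an accepting branch iff there is a computation of $\Pi_x$ whose restriction to the outermost membrane ends by emitting $\yes$ and whose guessed interaction counts in $\iTable$, together with the unused-object counts in $\uTable$, are realizable by the inner membranes; and (ii) Algorithm~\ref{alg:inner} answers the query positively iff such a realization exists. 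Granting these, $M$ accepts iff $\Pi_x$ has an accepting computation, which is exactly acceptance for a non-confluent recognizer. The space bound is then immediate from Proposition~\ref{thm:still-pspace}: $M$ is a nondeterministic polynomial-time machine with an oracle in $\NPSPACE$, hence decides a language in $\NP^\NPSPACE = \PSPACE$.

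For the reverse inclusion $\PSPACE \subseteq \NPMC_{\pAM[depth\mbox{-}1,-d,-ne]}^{[\star]}$ I would invoke the inverse simulation of~\cite{Leporati2016a}, where a uniform family of non-confluent shallow P~systems with active membranes, using only elementary division and neither dissolution nor non-elementary division, is shown to solve a $\PSPACE$-complete problem. Because the uniformity framework of Definition~\ref{def:uniform} composes with polynomial-time many-one reductions, solving a single $\PSPACE$-complete problem captures all of $\PSPACE$; and since this is achieved already in the \emph{uniform} (hence more restrictive) setting, the containment holds a fortiori for the larger semi-uniform class marked by $[\star]$. Combining the two inclusions yields the claimed equality for both readings of $[\star]$.

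The main obstacle is the correctness of the ``gluing'' that underlies claims (i) and (ii): Algorithms~\ref{alg:outermost} and~\ref{alg:inner} simulate the outermost and the inner membranes independently and communicate only through the aggregate tables $\iTable$ and $\uTable$, so I must show that any pair of locally consistent guesses assembles into a single genuine computation of $\Pi_x$, and conversely that every computation induces such a pair. Concretely, I would verify step by step the invariants that (a) the per-step send-in/send-out counts guessed for the outermost membrane are exactly those consumed by the depth-first inner simulation, so that every entry of $\iTable$ returns to $0$ at line~\ref{alg2:check-guess}; (b) maximal parallelism is respected on both sides, in particular that no object eligible for a send-in is left unused unless $\uTable$ certifies that no inner membrane could receive it; and (c) the upper bound $\maxT$ imposed when guessing interaction counts never truncates a legitimate value, which holds because $\maxT$ was chosen as the maximal number $m \times 2^{t}$ of membranes reachable per label in $t$ steps. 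Once these invariants are established, the equivalence between acceptance of $M$ and acceptance of $\Pi_x$ is immediate and the theorem follows.
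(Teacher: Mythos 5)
Your proposal takes essentially the same route as the paper: the upper bound via the polynomial-space machine $M$ of Section~3 (Algorithm~\ref{alg:outermost} simulating the outermost membrane, with consistency checked by the oracle of Algorithm~\ref{alg:inner}, the whole collapsed into $\PSPACE$ by Proposition~\ref{thm:still-pspace}), and the lower bound by invoking the inverse simulation of~\cite{Leporati2016a} together with the standard uniform-implies-semi-uniform and closure-under-reductions observations. The gluing invariants you single out as the main obstacle are precisely the correctness claims the paper leaves implicit, so your plan matches, and if anything makes more explicit, the published argument.
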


As long as no dissolution is allowed, the property of being elementary is a static one and, if no non-elementary division is present, the simulation of the outermost membrane can be extended to include all non-elementary membranes, allowing us to state the following result.

\begin{corollary}
    $\PSPACE = \NPMC_{\pAM[-d,-ne]}^{[\star]}$. \qed
\end{corollary}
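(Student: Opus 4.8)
The plan is to prove the two inclusions separately, noting that one direction is essentially free given Theorem~\ref{thm:pspace}. For the lower bound $\PSPACE \subseteq \NPMC_{\pAM[-d,-ne]}^{[\star]}$, I would simply observe that every shallow system obeying the $-d,-ne$ restrictions is \emph{a fortiori} a system of the same kind with no depth bound, so $\NPMC_{\pAM[depth\mbox{-}1,-d,-ne]}^{[\star]} \subseteq \NPMC_{\pAM[-d,-ne]}^{[\star]}$ holds by inclusion of the respective system classes; combined with the $\PSPACE$ lower bound already contained in Theorem~\ref{thm:pspace}, this yields the inclusion at once. All the real work therefore lies in the upper bound $\NPMC_{\pAM[-d,-ne]}^{[\star]} \subseteq \PSPACE$.

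For the upper bound, the first step is to exploit the structural rigidity forced by the two restrictions. Since the only membrane-creating rule is elementary division (which produces sibling copies at the same nesting level and is forbidden on non-elementary membranes), no membrane is ever placed \emph{inside} an elementary membrane, so an elementary membrane stays elementary; and since dissolution is absent, no non-elementary membrane can lose all its children and thereby become elementary. Hence the partition of the membrane structure into elementary and non-elementary membranes is invariant along every computation. In particular the non-elementary membranes never divide, so their number remains equal to the initial count, which is polynomial in $n$ by the uniformity condition and the encoding restriction, and they retain a fixed tree shape; only the elementary membranes can proliferate exponentially.

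The second step is to extend the two-phase simulation of Theorem~\ref{thm:pspace} accordingly. I would let the nondeterministic polynomial-time part (Algorithm~\ref{alg:outermost}) simulate \emph{all} non-elementary membranes directly, not merely the outermost one: since they are polynomially many and fixed, storing one configuration record per non-elementary membrane and indexing the tables $\iTable$ and $\uTable$ additionally by membrane identity blows their size up by only a polynomial factor, so everything stays polynomial. Communication between two non-elementary membranes is then resolved directly on both sides, whereas communication between a non-elementary membrane and its (possibly exponentially many) elementary children is handled exactly as before, by guessing the number of applications of each such rule, recording it in $\iTable$, and deferring the consistency test to the oracle. The oracle machine (Algorithm~\ref{alg:inner}) still performs a depth-first traversal over the elementary membranes and their division trees with a stack of depth at most $T$, so it remains in $\NPSPACE$, and the whole construction stays within $\NP^\NPSPACE = \PSPACE$ by Proposition~\ref{thm:still-pspace}.

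The main obstacle I anticipate is the bookkeeping for inter-membrane communication once several non-elementary membranes sit in a tree. A send-in or send-out rule is specified only by a child label $k$, so a single rule may fire simultaneously from several distinct non-elementary parents into their respective $k$-labelled children; the entries of $\iTable$ must therefore be indexed by the parent membrane as well, the consistency test must match, for each parent separately, the count guessed in Algorithm~\ref{alg:outermost} against the count consumed in Algorithm~\ref{alg:inner}, and each elementary membrane must be tagged in the stack with the parent it communicates with so that the maximal-parallelism check against $\uTable$ is attributed correctly. These are delicate but routine indexing arguments; crucially none of them affects the asymptotics, because the non-elementary membranes are only polynomially many while the exponential proliferation stays confined to the elementary membranes handled by the oracle.
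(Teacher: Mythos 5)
Your proposal is correct and follows essentially the same route as the paper, which proves the corollary by exactly the two observations you make: without dissolution elementarity is a static property, and without non-elementary division the polynomially many, never-dividing non-elementary membranes can be absorbed into the nondeterministic polynomial-time simulation of Algorithm~\ref{alg:outermost}, leaving only elementary membranes to the oracle. Your additional bookkeeping details (indexing $\iTable$ and $\uTable$ by the parent membrane and attributing the maximal-parallelism checks per parent) merely flesh out what the paper leaves implicit in its one-sentence justification.
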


\section{Conclusions}
\label{sec:conclusions}

We have shown that, differently from confluent P~systems, monodirectionality and a restriction on the depth of the system to $1$ (or, equivalently, the absence of both dissolution and non-elementary division) do not prevent non-confluent P~systems from reaching $\PSPACE$ in polynomial time. It remains open to establish if this upper bound can be extended to membrane structures of higher (non-constant) depth where non-elementary division is allowed. Since both monodirectionality and nesting depth have a huge influence in the computational power of confluent systems, it would be worthwhile to understand why they do not provide an analogous increase to non-confluent systems. These features are usually employed by algorithms designed for confluent P~systems to simulate the power of nondeterminism, so the question is: are they always useless when non-confluence is already present?

\bibliographystyle{splncs03}
\bibliography{Bibliography}

\end{document}